\title{Picture-perfect Quantum Key Distribution}
\author{Aleks Kissinger}
\affiliation{Institute for Computing and Information Sciences, Radboud University Nijmegen}
\email{aleks@cs.ru.nl}
\author{Sean Tull}
\affiliation{Department of Computer Science, University of Oxford}
\email{sean.tull@cs.ox.ac.uk}
\author{Bas Westerbaan}
\affiliation{Institute for Computing and Information Sciences, Radboud University Nijmegen}
\email{bas@westerbaan.name}
\date{}
\begin{document}

\maketitle

\begin{abstract}
We provide a new way to bound the security of quantum key distribution using only two high-level, diagrammatic features of quantum processes: the compositional behavior of complementary measurements and the essential uniqueness of purification. We begin by demonstrating a proof in the simplest case, where the eavesdropper doesn't noticeably disturb the channel at all and has no quantum memory. We then show how this approach extends straightforwardly to account for an eavesdropper with quantum memory and the presence of noise.
\end{abstract}

\section{Introduction}

Traditionally in cryptography, parties use pre-shared keys to communicate securely. In 1976 Diffie and Hellman introduced the first \emph{key agreement protocol} that allows two parties, say Alice and Bob, to securely establish a key over an insecure channel~\cite{nd}. There are two caveats to its security. First, it's an unauthenticated key agreement protocol,\footnote{%
   An attacker Eve that can intercept and modify all communication
   between Alice and Bob, can simply impersonate Bob and perform all
   the steps Bob would.  Now Alice thinks she established a shared key
   with Bob, where in reality she established a shared key with Eve.}
and second, its security hinges on the difficulty of computing discrete logarithms. Famously, Shor showed in 1994 \cite{Shor} that a (large-scale stable) quantum computer is able to calculate discrete logarithms with ease, breaking Diffie and Hellman's original scheme.

There are other key agreement protocols whose
    security is based on mathematical problems
    which are believed to  be difficult,
    even for quantum computers.
    An example is \textsc{NewHope}~\cite{newhope,googlenh}.
    Such schemes form part of the program of \emph{post-quantum crypto\-graphy}~\cite{pqcrypto}.

On the other hand Bennett and Brassard
    proposed a completely different unauthenticated key agreement
    protocol, now called BB84~\cite{BB84}, 
whose security is not based on a mathematical problem, but on physical assumptions on quantum mechanical systems. In this protocol Alice needs to be able to send Bob qubits. In most implementations, like~\cite{idq}, this is done by sending photons over a fiber optic cable, where the qubits are encoded in the polarization of the photons. After BB84 other variations have been proposed, notably E91~\cite{Ekert91}, and they are all indiscriminately referred to as Quantum Key Distribution (QKD).

In their original paper, Bennett and Brassard give a short argument for the security of BB84. Feeling this proof was not satisfactory, subsequent authors have proposed more meticulous proofs, for instance~\cite{mayers}. However, this added rigor came at the cost of complexity, prompting Shor and Preskill to publish a `simple proof'~\cite{simpleproof}.
This did not settle the matter: many subsequent publications have appeared on the security of BB84 and its variants, differing not only in levels of rigor vs.~simplicity, but also in the physical assumptions and the efficiency of the protocol itself (e.g.~qubits required per bit of shared key)~\cite{renner,qkdsec1,qkdsec2,qkdsec3,qkdsec4,qkdsec5,qkdsec6,qkdsec7,qkdsec8,qkdsec9}.

The present paper contributes another proof to the mix, whose primary aim is simplicity. It is quite different from those that came before, in that we adopt a purely graphical notation and rely on techniques which arose in categorical quantum mechanics~\cite{AC1} and the diagrammatic approach to quantum theory~\cite{CKbook}. We will show that two high-level features of quantum processes--namely the behavior of complementary measurements under composition~\cite{CD2} and the essential uniqueness of purification of quantum channels~\cite{QuantumFromPrinciples}--suffice to show that there exists no quantum process with which an eavesdropper can undetectably extract information about Alice and Bob's shared key.


We begin by briefly reviewing the graphical notation for quantum processes, including depictions of purification and complementary measurements. In Section~\ref{sec:qkd}, we describe a version of BB84 in this notation and give a one-line version of a proof of correctness which already appears in the literature~\cite{BWWWZ,CKbook}, but makes the (unreasonably) strong assumption that Eve is only allowed to measure and re-prepare in the $Z$ and $X$ bases. Our first main result removes this assumption, allowing Eve to perform any quantum process to (attempt to) extract information from Alice and Bob's channel:
\ctikzfig{eve}
We then formulate the condition that Eve's channel remains undetected by means of two equations, corresponding to the cases where Alice and Bob's measurement choices agree:
\[
\tikzfig{exact-req1}\qquad \qquad
\tikzfig{exact-req2}
\]
Using the behavior of complementary measurements, we show that this implies that Eve's channel separates:
\ctikzfig{eve-sep}

While already much more general than previous graphical proofs, this still makes two very strong assumptions. First, it assumes that Eve performs the same process each time Alice and Bob use their channel. Second, it assumes all equalities are exact, so it offers no guarantees in the presence of noise.

We remove the first limitation in Section~\ref{sec:memory} by allowing Eve to maintain an arbitrarily large quantum memory between usages of the channel, and show that the separation argument still holds. We remove the second limitation in Section~\ref{sec:noise} by showing the same graphical proof from Section~\ref{sec:qkd} goes through, thanks to the continuity properties of purification and diagram rewriting, if we replace equality by $\epsilon$-closeness in the completely bounded norm (written `$\,\cdots \underset{\mathrm{cb}}{\overset{\varepsilon}{=}} \cdots\,$'). This enables us to give a security bound for the QKD protocol comparable to the one suggested in~\cite{contstinespring}. In particular, the fact that Eve's process disturbs the channel very little can be formulated as:
\[
\tikzfig{approx-req1}\qquad\qquad
\tikzfig{approx-req2}
\]
This implies that Eve's process is within $N\sqrt{\epsilon}$ of a separable one:
\[
\tikzfig{separates-approx}
\]
for some constant $N$ depending only on the dimension of Alice and Bob's system.


\section{Preliminaries}\label{sec:prelims}

Throughout the paper, we will use \textit{string diagram} notation
for linear maps and channels. \cite{CKbook} Systems are depicted as wires and maps as boxes. To make it clear whether we are working with linear maps between Hilbert spaces or completely positive maps (CP-maps),
we will depict finite-dimensional Hilbert spaces $H, K, \ldots$ as thin wires and the associated spaces of operators $\mathcal B(H), \mathcal B(K), \ldots$ as thick wires.

A linear map $V \colon H \to K$ and CP-map $\Phi \colon \mathcal B(H) \to \mathcal B(K)$ (for Hilbert spaces $H,K$) are depicted respectively as:
\[
\tikzfig{lin-map}
\qquad\qquad
\tikzfig{cp-map}
\]
We omit wire labels if they are irrelevant or clear from context.
Compositions of maps is depicted by plugging boxes together vertically:
\[
V \circ U \ =:\ \tikzfig{lin-map-comp}
\qquad\qquad
\Psi \circ \Phi \ =:\ \tikzfig{cp-map-comp}
\]
and tensor products by putting boxes side-by-side:
\[
U \otimes V \ =:\ \tikzfig{lin-map-tensor}
\qquad
\Phi \otimes \Psi \ =:\ \tikzfig{cp-map-tensor}
\]
Similarly, maps between tensor products such as $U \colon H \to K \otimes L$ or $\Phi \colon \mathcal B(H ) \to \mathcal B(K) \otimes \mathcal B (L)$ are depicted as boxes with multiple input/output wires:
\[
\tikzfig{lin-map-multi}
\qquad\qquad
\tikzfig{cp-map-multi}
\]
where we identify $\mathcal B(H) \otimes \mathcal B(K) \cong \mathcal B(H \otimes K)$. Identity linear maps/CP-maps are represented as `plain wires':
\[
1_H\ =:\ \tikzfig{wire}
\qquad\qquad
1_{\mathcal B(H)} \ =:\ \tikzfig{dwire}
\]
since $U \circ 1_H = 1_K \circ U = U$ and similarly for CP-maps.
The trivial system $\mathbb C$ is depicted as `no wire', since $H \otimes \mathbb C \cong H$ and $\mathcal B(H) \otimes \mathbb C \cong \mathcal B(H)$. Regarding vectors as linear maps $v \colon \mathbb C \to H$ and positive operators as CP-maps $\rho : \mathbb C \to \mathcal B(H)$, we can depict vectors $\ket{\psi} \in H$ and quantum states in $\rho \in \mathcal B(H)$ respectively as maps from 0 to 1 wire:
\[
\tikzfig{lin-map-state}
\qquad\qquad
\tikzfig{cp-map-state}
\]
Similarly, we can depict linear functionals $\bra{\psi} \colon H \to \mathbb C$ and CP-maps of the form $\bm e \colon \mathcal B(H) \to \mathbb C$ as maps from 1 to 0 wires:
\[
\tikzfig{lin-map-effect}
\qquad\qquad
\tikzfig{cp-map-effect}
\]

\subsection{Purification}

The linear map that sends $\rho \in \mathcal B(H)$ to its trace is a CP-map $\Tr \colon \mathcal B(H) \to \mathbb C$, which we draw as a `ground' symbol:
\[
\Tr\ =\ \discard
\]

Using this notation, we can express the property of a CP-map being trace-preserving as follows:
\ctikzfig{causal}

Furthermore, any linear map $U \colon H \to K$ induces a CP-map $\widehat U \colon \mathcal B(H) \to \mathcal B(K)$ via:
\begin{equation}\label{eq:pure}
\widehat U(\rho) = U \,\rho \,U^{\dagger}
\end{equation}
We call a CP-map \textit{pure} if it is of the form of \eqref{eq:pure}. We call this method of turning a linear map into a pure CP-map \textit{doubling}.

An important theorem about CP-maps is that any CP-map can be represented in an essentially unique manner, by means of a pure CP-map and the trace.

\begin{theorem}[Stinespring / purification, \cite{Stinespring,Chiri1}]
For any completely positive map $\Phi \colon \mathcal B (H)\to \mathcal B (K)$ there is a Hilbert space~$L$ and linear map $V \colon H \to K \otimes L$ with 
\begin{equation}\label{eq:purification}
\tikzfig{purification}
\end{equation}
Moreover, for any (other) linear map $V' \colon H \to K \otimes L$ satisfying \eqref{eq:purification}, there is a unitary $U \colon L \to L$ such that:
\ctikzfig{unique-purification}
\end{theorem}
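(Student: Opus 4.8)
The plan is to prove the existence part by the Kraus/Stinespring construction and the uniqueness part by reducing to the uniqueness of purifications of \emph{pure} states, which in $\catFHilb$ is elementary linear algebra. Since $H,K$ are finite-dimensional I will not need any analytic input.

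\textbf{Existence.} First I would pass to the Choi operator $C_\Phi := (\Phi \otimes 1_{\mathcal B(H)})(\ket{\Omega}\bra{\Omega}) \in \mathcal B(K \otimes H)$, where $\ket{\Omega} = \sum_j \ket{j}\ket{j} \in H \otimes H$, and use the standard fact that $\Phi$ is completely positive iff $C_\Phi \geq 0$. Diagonalising $C_\Phi = \sum_{i=1}^{n} \ket{w_i}\bra{w_i}$ and ``unvectorising'' each $\ket{w_i}$ to the unique operator $K_i \colon H \to K$ with $(K_i \otimes 1_H)\ket{\Omega} = \ket{w_i}$, a one-line check on the matrix units $\rho = \ket{j}\bra{k}$ produces the Kraus form $\Phi(\rho) = \sum_i K_i\, \rho\, K_i^{\dagger}$. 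Taking $L = \C^{n}$ with orthonormal basis $\{\ket{i}\}$ and $V = \sum_i K_i \otimes \ket{i} \colon H \to K \otimes L$ then gives $\mathrm{Tr}_L(V \rho V^{\dagger}) = \sum_i K_i \rho K_i^{\dagger} = \Phi(\rho)$, which is exactly \eqref{eq:purification} --- the pure CP-map $\widehat V$ of \eqref{eq:pure} with the $L$-output discarded. (This half can alternatively just be cited.)

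\textbf{Uniqueness.} Given $V,V' \colon H \to K\otimes L$ both satisfying \eqref{eq:purification}, I would move to the vectors $\ket{W} := (V \otimes 1_H)\ket{\Omega}$ and $\ket{W'} := (V' \otimes 1_H)\ket{\Omega}$, regarded (after reordering tensor factors) as elements of $(K \otimes H)\otimes L$. Evaluating \eqref{eq:purification} on the matrix units shows it is equivalent to $\mathrm{Tr}_L\ket{W}\bra{W} = C_\Phi = \mathrm{Tr}_L\ket{W'}\bra{W'}$, i.e.\ $\ket{W}$ and $\ket{W'}$ are two purifications of the \emph{same} state $C_\Phi$ on $K\otimes H$. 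So everything reduces to the pure-state statement: if $\ket{\psi},\ket{\phi} \in A \otimes L$ have $\mathrm{Tr}_L\ket{\psi}\bra{\psi} = \mathrm{Tr}_L\ket{\phi}\bra{\phi}$, then $\ket{\phi} = (1_A \otimes U)\ket{\psi}$ for some unitary $U \colon L \to L$ --- which I would prove by Schmidt-decomposing both vectors, observing that equality of the marginals forces equal Schmidt coefficients and, after a unitary rotation within each eigenspace of the common marginal (absorbed into the environment vectors), equal $A$-side Schmidt vectors; the resulting bijection of environment Schmidt vectors is a partial isometry between two equal-dimensional subspaces of $L$, hence extends to a unitary $U$ on $L$. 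Feeding this back gives $(1_K \otimes U \otimes 1_H)\ket{W} = \ket{W'}$, and since $\ket{\Omega}$ has full Schmidt rank the map $X \mapsto (X \otimes 1_H)\ket{\Omega}$ is injective, so unvectorising yields $(1_K \otimes U) V = V'$ --- precisely the relation in the statement (post-composition of $\widehat V$ with the unitary $\widehat U$ on the $L$-wire). This pure-state fact is, in diagrammatic form, the uniqueness of purification of states of~\cite{CKbook}, and the only $\catFHilb$-specific ingredient is the dimension count upgrading the environment isometry to a unitary.

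\textbf{Where the work is.} There is no genuine obstruction here --- this is a classical theorem --- but the steps I would be most careful with are: the tensor-factor and vectorisation bookkeeping when passing between the ``map'' picture ($V \colon H \to K\otimes L$) and the ``vector'' picture ($\ket{W} \in (K\otimes H)\otimes L$), so that $U$ acts on exactly the $L$-factor; the degenerate-spectrum case in the Schmidt-matching step, where the $A$-side Schmidt vectors need not agree on the nose but must be rotated within each eigenspace of $C_\Phi$; and the check that the partial isometry on $L$ extends to a genuine \emph{unitary} rather than merely an isometry, which is exactly why no largeness hypothesis on $L$ is needed. For existence, the one non-trivial input is ``complete positivity $\Leftrightarrow$ Choi positivity'', which I would either cite or establish by the usual argument.
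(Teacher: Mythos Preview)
The paper does not prove this theorem; it states it with citations to Stinespring and Chiribella et al.\ and uses it as a black box (the essential-uniqueness clause is invoked in the proof of Theorem~\ref{thm:exact_security_QKD} and, via its continuous version Theorem~\ref{thm:continousstinespring}, in Theorem~\ref{thm:noise_tolerance}). So there is nothing to compare against.

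Your argument is the standard finite-dimensional one and is correct. Existence via Choi--Kraus is routine; your uniqueness reduction --- passing to the Choi vectors $\ket{W},\ket{W'}$ so that \eqref{eq:purification} becomes equality of marginals, then invoking the Schmidt-based uniqueness of state purifications --- is exactly the clean way to do it in $\catFHilb$. The three ``careful'' points you flag (tensor bookkeeping, degenerate Schmidt spectrum, upgrading the environment partial isometry to a unitary on $L$) are precisely the places where a sloppy write-up can go wrong, and your treatment of each is sound: the degenerate case is handled because any two orthonormal bases of the same eigenspace of $C_\Phi$ are unitarily related, and that unitary can be pushed onto the $L$-side; the partial isometry extends to a unitary because its initial and final spaces have the same dimension (the Schmidt rank) inside the same $L$. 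The injectivity of $X \mapsto (X\otimes 1_H)\ket{\Omega}$ is immediate from $\ket{\Omega}$ having full Schmidt rank, so the un-vectorisation step is justified.
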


\subsection{Spiders and decoherence}

To each orthonormal basis (ONB) $\{ \ket i \} $ of  a (finite-dimensional) Hilbert space~$H$, we associate a family of linear maps called \textit{spiders} as follows:

\[ \whitedot_m^n :=\ \tikzfig{spider-mn}\ =\ \sum_i \underbrace{\ket{i\ldots i}}_n\underbrace{\bra{i\ldots i}}_m \]
where a spider with zero legs is a complex number $D$, the dimension of the Hilbert space. 

Different ONBs induce different spiders, and furthermore an ONB is uniquely fixed by its family of spiders \cite{CPV}.
Adjacent spiders associated with the same ONB fuse together. That is, for $k \geq 1$, we have:
\begin{equation}
\label{eq:spider1}
\tikzfig{spidercompcount}\ =\ 
\tikzfig{spidernnmm}
\end{equation}
We write the doubled spiders as follows:
\[
\whitedot_m^n := \tikzfig{spider} \quad\mapsto\quad
\widehat{\whitedot_m^n} := \tikzfig{qspider}
\]
These CP-maps, which satisfy the same fusion law, are called \textit{quantum spiders}. Examining the concrete expression:
\begin{align*}
\widehat{\whitedot_m^n}(\rho) \  :=\ & (\whitedot_m^n) \, \rho\,  (\whitedot_m^n)^\dagger \\ 
   \ =\ &\ 
   \sum_{ij}  \underbrace{\ket{i\ldots i}}_n\underbrace{\bra{i\ldots i}}_m\, \rho\, \underbrace{\ket{j\ldots j}}_m\underbrace{\bra{j\ldots j}}_n 
   \end{align*}
we see that:
\[ \Tr(\widehat{\whitedot_m^n}(\rho)) \ =\ 
\sum_{i} \underbrace{\bra{i\ldots i}}_m\, \rho\, \underbrace{\ket{i\ldots i}}_m
\]
Hence, such a map is trace-preserving whenever it has exactly one input:
\begin{equation}\label{eq:tp-one-input}
  \tikzfig{tp-one-input}
\end{equation}

One derived map which will be particularly important in the sequel is the \textit{decoherence map}, which arises from tracing out one output of $\widehat{\whitedot_1^2}$:
\[
\bm d_{\whitedot} \ =\ \tikzfig{decoh-def}
\]
Concretely:
\[
\bm d_{\whitedot}(\rho) =
\sum_i \ketbra{i}{i}\, \rho\, \ketbra{i}{i}
\]
That is, it projects a positive matrix written with respect to the ONB $\{ \ket i \} \subset H$ to its diagonal entries:
\[ \rho_{ij} \mapsto \delta_{ij} \rho_{ii} \]
The fact that this CP-map is trace-preserving follows from \eqref{eq:tp-one-input}:
\ctikzfig{decoh-causal}
and idempotence from spider-fusion and \eqref{eq:tp-one-input}:
\ctikzfig{decoh-idem}

If $\dim(H) = n$, decoherence gives a rank-$n$ projector in the $n^2$-dimensional space $\mathcal B(H)$, hence we can split this projector using the following linear maps:
\[
  \bm m_{\whitedot} \ =\  \tikzfig{meas-types} \qquad\qquad
  \bm e_{\whitedot} \ =\  \tikzfig{enc-types}
\]
where:
\[
\bm m_{\whitedot}(\ketbra{i}{j}) := \delta_{ij} \ket{i}
\qquad\qquad
\bm e_{\whitedot}(\ket{i}) := \ketbra{i}{i}
\]
Then we have:
\begin{equation}\label{eq:decoh-split}
  \tikzfig{decoh-split} \qquad\textrm{and}\qquad
\tikzfig{decoh-split2}
\end{equation}

Treating $\mathcal B(H)$ itself as a Hilbert space using the Hilbert-Schmidt inner product $\braket{U}{V} := \Tr(U^\dagger V)$, one can easily verify that $\bm m_{\whitedot} = (\bm e_{\whitedot})^\dagger$.

These maps have a straightforward operational interpretation which will play a role in the following sections. The first, $\bm m_{\whitedot}$, sends a quantum state to a classical probability distribution, whose entries are the Born rule probabilities associated with an ONB measurement of $\{\ket{i}\}$:
\[
\bm m_{\whitedot}(\rho) = \left(
\begin{matrix}
  \bra{1}\rho\ket{1} \\
  \vdots \\
  \bra{n}\rho\ket{n}
\end{matrix}
\right)
\]
Hence, we can think of it as a map from a quantum space $\mathcal B(H)$ to a classical space of probability distributions in $H$ (treated simply as a vector space), representing the act measuring in the given basis.
\ctikzfig{meas-cq}

Conversely, $\bm e_{\whitedot}$ sends a classical value (or distribution over classical values) to an encoding of that value as a quantum state:
\ctikzfig{enc-cq}
Hence, it represents the act of \textit{encoding} a classical value with respect to an ONB of quantum states.

With these interpretations in mind, the left equation in \eqref{eq:decoh-split} gives an operational reading for decoherence. Namely, it arises from measuring a quantum system, followed by re-preparing it in the same basis.

Furthermore, certain spiders take on a special meaning as operations on classical systems. Namely,
\begin{equation}\label{eq:copy}
  \tikzfig{copy}\ =\ \sum_i \ketbra{ii}{i}
\end{equation}
is the process which \textit{copies} a classical value,
\begin{equation}\label{eq:delete}
  \tikzfig{delete}\ =\ \sum_i \bra{i}
\end{equation}
is \textit{deleting} (a.k.a. marginalisation), and
\begin{equation}\label{eq:uniform}
  \oneoverD\ \tikzfig{uniform}\ =\ \oneoverD\ \sum_i \ket{i}
\end{equation}
is the \textit{uniform probability distribution}. Note that \eqref{eq:delete} and \eqref{eq:uniform} are the classical analogue to the trace and the maximally mixed state, respectively. Since \eqref{eq:copy} is a cloning (i.e. broadcasting) operation, it has no quantum analogue.

The final spider-derived map we will use is a \textit{non-demolition measurement}:
\ctikzfig{non-demo}
This map captures the process where we perform an ONB measurement (which produces classical data) but also leave the quantum system intact. Alternatively, we can read the RHS above literally as performing a (demolition) measurement, copying the measurement outcome, and using one of the copies to re-prepare the quantum system.



\subsection{Complementary spiders}

We can study the interaction of distinct ONBs on the same Hilbert space by introducing distinct families of spiders. From hence forth, we will fix two ONBs $\{ \ket{z_i} \}$, $\{ \ket{x_i} \}$ of a Hilbert space $H$ of dimension $D$, and let:
\begin{align*}
  \whitedot_m^n & :=\ \tikzfig{spider-mn}\ =\ \sum_i \underbrace{\ket{z_i\ldots z_i}}_n\underbrace{\bra{z_i\ldots z_i}}_m \\[4mm]
  \graydot_m^n & :=\ \tikzfig{gray-spider-mn}\ =\ \sum_i \underbrace{\ket{x_i\ldots x_i}}_n\underbrace{\bra{x_i\ldots x_i}}_m
\end{align*}
Consequently, we let:
\begin{align*}
\bm m_{\whitedot}(\ketbra{z_i}{z_j}) & := \delta_{ij} \ket{z_i}
&
\bm e_{\whitedot}(\ket{z_i}) & := \ketbra{z_i}{z_i} \\
\bm m_{\graydot}(\ketbra{x_i}{x_j}) & := \delta_{ij} \ket{x_i}
&
\bm e_{\graydot}(\ket{x_i}) & := \ketbra{x_i}{x_i}
\end{align*}

Two ONBs are said to be \emph{mutually unbiased} or \emph{complementary} whenever any member of one basis gives equal probabilities to all outcomes of a measurement with respect to the other, i.e.~we have $|\braket{z_i}{x_j}|^2 = \frac{1}{D}$, for all $i, j$. We can also express this property succinctly in terms of the measurement/encoding maps associated with a pair of spiders.

\begin{theorem}[\cite{CKbook}]\label{thm:complementary1}
  Two ONBs are mutually unbiased if and only if their measurement/encoding maps satisfy the following equation:
  \begin{equation}\label{eq:complementaritythick}
  \tikzfig{complementaritythick}
  \end{equation}
\end{theorem}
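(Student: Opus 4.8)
The plan is to reduce the equation \eqref{eq:complementaritythick} to the scalar condition $|\braket{x_j}{z_i}|^2 = \oneoverD$ by unfolding both sides on the ONB $\{\ket{z_i}\}$. The left-hand side of the equation is the composite $\bm m_{\graydot} \circ \bm e_{\whitedot} \colon H \to H$ (a thin wire into the thick wire $\mathcal B(H)$ and back out, whence the name), while the right-hand side is $\oneoverD$ times the $\whitedot$-delete followed by the $\graydot$-uniform, i.e.\ the rank-one linear map sending every $\ket{z_i}$ to $\oneoverD \sum_j \ket{x_j}$. Since every map in sight is linear, it suffices to compare the two sides on each basis vector $\ket{z_i}$, and this comparison — after expanding $\ket{z_i}$ in the $\{\ket{x_j}\}$ basis — comes down to a statement about scalars.

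For the forward direction, assume the two bases are mutually unbiased. Then $\bm e_{\whitedot}(\ket{z_i}) = \ketbra{z_i}{z_i}$, and writing $\ket{z_i} = \sum_j \braket{x_j}{z_i}\ket{x_j}$ gives $\ketbra{z_i}{z_i} = \sum_{j,k} \braket{x_j}{z_i}\,\overline{\braket{x_k}{z_i}}\,\ketbra{x_j}{x_k}$. Applying $\bm m_{\graydot}$, which annihilates the off-diagonal terms $\ketbra{x_j}{x_k}$ with $j \neq k$ and sends $\ketbra{x_j}{x_j}$ to $\ket{x_j}$, leaves $\sum_j |\braket{x_j}{z_i}|^2 \ket{x_j}$; by mutual unbiasedness every coefficient equals $\oneoverD$, so the result is $\oneoverD \sum_j \ket{x_j}$, which is exactly the right-hand side of \eqref{eq:complementaritythick} evaluated at $\ket{z_i}$. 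As this holds for all $i$, the equation follows by linearity. For the converse, read the computation backwards: evaluating the assumed equation at $\ket{z_i}$ gives $\sum_j |\braket{x_j}{z_i}|^2 \ket{x_j} = \oneoverD \sum_j \ket{x_j}$, and extracting the $\ket{x_k}$-component of both sides — legitimate since $\{\ket{x_j}\}$ is orthonormal — yields $|\braket{x_k}{z_i}|^2 = \oneoverD$ for all $i, k$, which is precisely the defining condition of mutual unbiasedness.

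The whole argument is routine linear algebra, so there is no real obstacle; the one point worth stating carefully is the typing of the wires, namely that $\bm e_{\whitedot}$ and $\bm m_{\graydot}$ are ordinary linear maps between the thin wire $H$ and the thick wire $\mathcal B(H)$, not CP-maps, so no positivity or complete-positivity hypothesis enters. A purely diagrammatic proof is also available — rewrite $\bm m_{\graydot} \circ \bm e_{\whitedot}$ using spider fusion together with the splitting identities of \eqref{eq:decoh-split} — but the coordinate computation above is the most direct route and matches the treatment in \cite{CKbook}.
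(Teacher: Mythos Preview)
Your proof is correct and follows essentially the same approach as the paper: both reduce the operator equation to the scalar identity $|\braket{x_j}{z_i}|^2 = \oneoverD$ by evaluating on basis vectors. The paper sandwiches the two sides between $\ket{z_i}$ and $\bra{x_j}$ in one step, using the adjoint relation $\bm m_{\graydot} = (\bm e_{\graydot})^\dagger$ to rewrite the resulting scalar as a Hilbert--Schmidt inner product $\Tr(\ketbra{x_j}{x_j}\ketbra{z_i}{z_i})$, whereas you first compute the full image $\bm m_{\graydot}\bm e_{\whitedot}\ket{z_i}$ and then read off components; this is a purely cosmetic difference.
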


\begin{proof} Precomposing the LHS of \eqref{eq:complementaritythick} with $\ket{z_i}$ and post-composing with $\bra{x_j}$ yields:
  \[
  \bra{x_j} \bm m_{\graydot} \bm e_{\whitedot} \ket{z_i} =
  (\bm e_{\graydot} \ket{x_j})^\dagger (\bm e_{\whitedot} \ket{z_i}) \]
  \[
  = \Tr(\ketbra{x_j}{x_j}\ketbra{z_i}{z_i}) = |\braket{z_i}{x_j}|^2
  \]
  By definition of spiders, performing the same pre- and post-composition on the RHS yields \oneoverD, which completes the proof.
\end{proof}


Rather than resorting to measurement and encoding maps, we can also express mutual unbiasedness directly in terms of spiders with the help of an additional linear map, called the \textit{antipode} associated with a pair of spiders:
\[
\tikzfig{dualiser-def} \ = \ \sum_{ij} \braket{z_i}{x_j} \ketbra{x_j}{z_i}
\]

\begin{theorem}[\cite{CD2}] \label{thm:complementary2}
  Two ONBs are mutually unbiased if and only if their associated spiders satisfy the following equation:
  \begin{equation} \label{eq:complementarity}
  \tikzfig{complementarity}
\end{equation}
\end{theorem}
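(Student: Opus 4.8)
The plan is to reduce the statement to the elementary characterisation of mutual unbiasedness, namely $|\braket{z_i}{x_j}|^2 = \tfrac1D$ for all $i,j$, by testing equation \eqref{eq:complementarity} against basis vectors, exactly in the spirit of the proof of Theorem~\ref{thm:complementary1}. Since $\{\ket{z_i}\}$ is an ONB, the left-hand side of \eqref{eq:complementarity} is determined by the vectors obtained by pre-composing with each $\ket{z_i}$, and since $\{\ket{x_j}\}$ is an ONB it is in turn determined by the scalars obtained by also post-composing with each $\bra{x_j}$; the same applies to the right-hand side. So it suffices to compute $\bra{x_j}(\mathrm{LHS})\ket{z_i}$ and $\bra{x_j}(\mathrm{RHS})\ket{z_i}$ and show they agree for all $i,j$ precisely when the two bases are complementary, which yields both directions of the ``iff'' at once.

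For the left-hand side I would feed $\ket{z_i}$ into the white spider, which copies it as a $z$-basis vector; the antipode on the connecting wire turns one copy into $\sum_b \braket{z_i}{x_b}\ket{x_b}$ by its defining expression; and the gray spider then multiplies in the $x$-basis, so that contracting $\ket{z_i}$ against $\ket{x_b}$ produces a factor $\braket{x_b}{z_i}$. Post-composing with $\bra{x_j}$ selects the $b=j$ term, leaving $\braket{z_i}{x_j}\braket{x_j}{z_i} = |\braket{z_i}{x_j}|^2$. The right-hand side of \eqref{eq:complementarity} is the disconnected diagram built from the white co-unit on the input and the gray unit on the output (together with the scalar $\tfrac1D$ fixed by the unnormalised spider conventions used here), so $\bra{x_j}(\mathrm{RHS})\ket{z_i} = \tfrac1D\,(\whitecounit\ket{z_i})\,(\bra{x_j}\grayunit) = \tfrac1D$, independently of $i$ and $j$. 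Comparing, the two sides of \eqref{eq:complementarity} coincide if and only if $|\braket{z_i}{x_j}|^2 = \tfrac1D$ for all $i,j$, i.e.\ if and only if the bases are mutually unbiased.

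The one genuinely load-bearing point — the thing to get right rather than the thing that is hard — is the role of the antipode: its coefficients $\braket{z_i}{x_j}$ are exactly what is needed so that, once the gray spider contributes the complex conjugate $\braket{x_j}{z_i}$, one obtains the modulus squared $|\braket{z_i}{x_j}|^2$ rather than the unphased quantity $\braket{z_i}{x_j}^2$; without the antipode the equivalence would fail for bases that are unbiased only up to phases. The remaining work — pinning down the precise scalar $\tfrac1D$, and checking that symmetry of the spiders makes the placement of the antipode on either output wire of the white spider immaterial — is routine bookkeeping. Alternatively, one could deduce \eqref{eq:complementarity} from \eqref{eq:complementaritythick} via Theorem~\ref{thm:complementary1} by unpacking how the doubled maps $\bm m_{\whitedot}$ and $\bm e_{\whitedot}$ decompose into undoubled spiders and an antipode, but the direct computation on basis elements above is shorter and self-contained.
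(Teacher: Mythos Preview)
Your proposal is correct and follows essentially the same approach as the paper: pre-compose the two sides of \eqref{eq:complementarity} with $\ket{z_i}$ and post-compose with $\bra{x_j}$, obtaining $\bra{x_j}s\ket{z_i}\,\braket{x_j}{z_i} = |\braket{z_i}{x_j}|^2$ on the left and $\tfrac1D$ on the right. The paper's proof is terser, but your added explanation of how the antipode supplies the conjugate factor is exactly the mechanism at work.
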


\begin{proof}
  Follows similarly to the proof of Theorem~\ref{thm:complementary1}. Pre-composing the LHS of \eqref{eq:complementarity} with $\ket{z_i}$ and post-composing with $\bra{x_j}$ yields:
  \[ \bra{x_j}s\ket{z_i} \braket{x_j}{z_i} = \braket{z_i}{x_j} \braket{x_j}{z_i} = |\braket{x_j}{z_i}|^2 \]
  Whereas pre- and post-composing on the RHS again yields $\oneoverD$.
\end{proof}

\begin{remark}
  The two conditions for unbiasedness given by Theorems \ref{thm:complementary1} and \ref{thm:complementary2} are related to each other via the (basis-dependent) isomorphism $\mathcal B(H) \cong H \otimes H$:
  \[ \sum_{ij} \rho_{ij} \ketbra{z_j}{z_i}
     \ \ \overset{\sim}{\longleftrightarrow} \ \ 
     \sum_{ij} \rho_{ij} \ket{z_i} \otimes \ket{z_j} \]
  Since the above isomorphism is defined with respect to the \whitedot-basis, the map $s$ corrects the \graydot-spider to account for this basis-dependence:
  \[
  \tikzfig{white-enc}
  \ \overset{\sim}{\longleftrightarrow}\ 
  \tikzfig{copy}
  \qquad\qquad
  \tikzfig{gray-meas}
  \ \overset{\sim}{\longleftrightarrow}\ 
  \tikzfig{graymult-correct}
  \]
\end{remark}

\begin{remark}
  The map $s$ is called an antipode because equation~\eqref{eq:complementarity} is the antipode law for a Hopf algebra. If the spiders associated with a pair of mutually unbiased bases satisfy the other three Hopf algebra laws, they are called \textit{strongly complementary bases}, which are special mutually unbiased bases which always arise from finite abelian groups via Fourier transform~\cite{CKbook}.
\end{remark}

\section{Key Distribution Protocol}\label{sec:qkd}


Alice chooses a random bit and encodes this bit as a qubit using either the Z $\whitedot$ or X $\graydot$ basis with equal probability. She sends the qubit to Bob. Independently, Bob chooses to perform either a Z or X basis measurement on the qubit he received, again with equal probability.
With probability 1/2 their choices agree and the bit is perfectly transmitted:
\[
\tikzfig{encode1} 
\qquad \qquad \qquad
\tikzfig{encode1a} 
\]
Otherwise, Bob receives a random bit, with no information conveyed:
\[
\tikzfig{encode2} 
\qquad \qquad \qquad
\tikzfig{encode3} 
\]
To agree on a shared key with average size $n$, Alice and Bob go through the previous routine~$4n$ times.
Then Alice and Bob announce the bases they used for encoding and measuring respectively and discard those bits where the bases disagree.  On average~$2n$ bits remain. To check for trouble, Alice randomly picks~$n$ of the remaining bits and announces their value to Bob. If there is any mismatch on these check-bits, Alice and Bob abort. If not, Alice and Bob are left with (on average) $n$ bits.

Suppose that an eavesdropper Eve intercepts a transmitted qubit in attempt to extract information. Eve does not know which basis Alice used to encode her bit, so let us first assume Eve adopts a naive strategy whereby she randomly decides to perform a non-demolition measurement with respect to Z or X.

Alice and Bob will ignore any bits for which their basis choices differ, so we need only consider the case where they match. If Alice and Bob both choose Z, and Eve chooses (correctly) to also measure in Z:
\begin{equation*}
\tikzfig{outcomes1}
\end{equation*}
she indeed receives a perfect copy of Alice's bit. On the other hand, if she guesses wrong, and measures X:
\begin{equation*}
\tikzfig{outcomes2}
\end{equation*}
Eve and Bob both receive a random bit, completely uncorrelated to Alice's original bit. Alice and Bob's communication will be similarly disrupted when they both measure X but Eve measures Z. In a longer run, these disruptions will be detected by Alice and Bob using the check-bits, giving away Eve's presence. 

Now let us consider the case where Eve can apply any operation to attempt to extract some information about Alice's bit. That is, show applies some quantum channel $\Phi \colon \mathcal{B}(\ahilb) \to \mathcal{B}(\ahilb \otimes \ehilb)$, where $\ahilb$ is the Hilbert space of the qubit and $\ehilb$ is that of some other system possessed by Eve:
\ctikzfig{eve}

In order for Eve's intervention to remain undetected in either case where Alice and Bob's measurements agree, Eve's channel must satisfy the following equations:
\begin{equation}\label{eq:eve-secret}
\tikzfig{exact-req1-enc}\qquad \qquad
\tikzfig{exact-req2-enc}
\end{equation}
We now prove that, in this scenario, Eve cannot possibly extract any information. That is, her channel separates.

\begin{theorem} \label{thm:exact_security_QKD}
For any $\Phi$ satisfying \eqref{eq:eve-secret} for complementary ONBs \whitedot/\graydot we have:
\[
\tikzfig{separateslabels}
\]
for some state $\rho$ of $\mathcal{B}(\ehilb)$.
\end{theorem}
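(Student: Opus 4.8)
The plan is to purify Eve's channel once and for all, and then to argue that the two undetectability equations \eqref{eq:eve-secret} pin the resulting dilation down so tightly that Eve's register ends up carrying no information about Alice's input.

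First I would invoke Stinespring/purification: since $\Phi$ is a channel there is a Hilbert space $L$ and a linear map $V \colon H \to H \otimes E \otimes L$ with $\Phi(\sigma) = \mathrm{Tr}_L[V\sigma V^\dagger]$, and since $\Phi$ is trace-preserving $V$ is an isometry; diagrammatically $\Phi$ is $\widehat V$ with the $L$-wire discarded. Next I would read off what the first equation of \eqref{eq:eve-secret} says about $V$. Using the decoherence splitting \eqref{eq:decoh-split}, that equation amounts to $\bm m_{\whitedot} \circ \mathrm{Tr}_E \circ \Phi \circ \bm e_{\whitedot} = \id$: when Alice encodes a classical value in the $\whitedot$-basis and Bob later measures in the $\whitedot$-basis, he recovers it exactly. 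Writing $\tau_i := \mathrm{Tr}_{EL}[\,V\ket{z_i}\!\bra{z_i}V^\dagger]$ for the reduced state of $V\ket{z_i}$ on $H$, the equation says that the $\whitedot$-diagonal of $\tau_i$ is the indicator at $i$; since $\tau_i$ is a density operator this already forces $\tau_i = \ketbra{z_i}{z_i}$. So $V\ket{z_i}$ is a pure state whose $H$-marginal is pure, and by the essential uniqueness of purification it must be a product, $V\ket{z_i} = \ket{z_i}\otimes\ket{\beta_i}$ for some unit vector $\ket{\beta_i}\in E\otimes L$. The second equation of \eqref{eq:eve-secret} gives, verbatim, $V\ket{x_j} = \ket{x_j}\otimes\ket{\gamma_j}$ for some unit vectors $\ket{\gamma_j}\in E\otimes L$.

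Complementarity then glues the two descriptions together. Expanding $\ket{x_j} = \sum_i \braket{z_i}{x_j}\ket{z_i}$ and applying $V$ in the two available ways yields
\[
\sum_i \braket{z_i}{x_j}\,\ket{z_i}\otimes\bigl(\ket{\gamma_j} - \ket{\beta_i}\bigr) \;=\; 0 .
\]
Since $\{\ket{z_i}\}$ is a basis and mutual unbiasedness makes every overlap $\braket{z_i}{x_j}$ nonzero, each summand vanishes, so all the $\ket{\beta_i}$ equal a single unit vector $\ket\beta$; hence $V\ket{z_i} = \ket{z_i}\otimes\ket\beta$ for all $i$, so $V = \id_H\otimes\ket\beta$ by linearity, and $\Phi(\sigma) = \mathrm{Tr}_L[\sigma\otimes\ketbra\beta\beta] = \sigma\otimes\rho$ with $\rho := \mathrm{Tr}_L\ketbra\beta\beta$ a state of $\mathcal B(E)$. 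That is exactly the claimed separation (with, in fact, the identity on the $H$-wire).

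The step I expect to be the main obstacle is the middle one: converting the operational statement ``Bob's bit always agrees with Alice's'' into the rigid conclusion ``$V\ket{z_i}$ is an exact product''. Diagrammatically this is precisely where the essential uniqueness of purification is doing the work (a pure state with pure marginal has a product purification), and it is also where one must be careful that the equations \eqref{eq:eve-secret} are the right formalisation of undetectability. The gluing step, by contrast, uses only that complementary bases have everywhere-nonzero overlaps --- the content of Theorems~\ref{thm:complementary1}--\ref{thm:complementary2} --- and a fully basis-free rendering would replace it by the antipode law \eqref{eq:complementarity}.
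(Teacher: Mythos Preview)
Your argument is correct, and it is a genuinely different route from the paper's. You work element-wise: after purifying $\Phi$ to $\widehat V$, you feed in basis states, use the fact that a density matrix whose diagonal is a point mass must itself be a rank-one projector, and then invoke purity of the marginal (Schmidt decomposition) to force $V\ket{z_i}=\ket{z_i}\otimes\ket{\beta_i}$. The gluing step uses only that $\braket{z_i}{x_j}\neq 0$, which is the bare content of mutual unbiasedness.

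The paper instead stays entirely at the diagrammatic/channel level. It converts \eqref{eq:eve-secret} to the decoherence form \eqref{eq:Eve_requirements_exact}, purifies the whole composite (decoherence included) and applies essential uniqueness of purification to produce a unitary $U$ relating the two sides. Spider fusion then yields an equation of the shape \eqref{eq:phaseproofstep} for each colour, and the two are combined using the antipode/complementarity law \eqref{eq:complementarity} rather than the nonvanishing of overlaps. So where you use Schmidt on states and a basis-expansion argument, the paper uses Stinespring uniqueness on channels and the spider calculus.

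Your approach is shorter and more elementary; it makes the mechanism (``perfect bit recovery forces a product, complementarity forces the auxiliary factor to be constant'') completely transparent. The paper's approach buys something else: because every step is a diagram rewrite with bounded-norm constituents, it lifts verbatim to the approximate setting of Section~\ref{sec:noise} via the rule~\eqref{eq:diagreasoning} and the continuity of Stinespring. Your exact steps (``diagonal $=\delta_i$ implies $\tau_i=\ketbra{z_i}{z_i}$'', ``pure marginal implies product'') would each need their own continuity lemma to carry over, so the diagrammatic proof is better suited to the noise-tolerant extension that is the paper's main goal.
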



\begin{proof}
By precomposing measurement maps and postcomposing encoding maps, we find
\begin{equation} \label{eq:Eve_requirements_exact}
\tikzfig{exact-req1}\qquad \qquad
\tikzfig{exact-req2}
\end{equation}
By purifying $\Phi$, we can without loss of generality assume that Eve's map $\Phi$ is pure, say with $\Phi = \widehat \evepure$. The left-hand side of~\eqref{eq:Eve_requirements_exact} states that:
\begin{equation} \label{eq:addpurestate}
\tikzfig{proof1}
\end{equation}
for any normalized pure state $\psi$ of $\ehilb \otimes \ahilb$. By essential uniqueness of purification, we conclude that:
\begin{equation} \label{eq:unitaryappearsinproof}
\tikzfig{proof3}
\end{equation}
for some unitary $U$ on $\ahilb \otimes \ehilb \otimes \ahilb$. Hence:
\begin{equation} \label{eq:counitsontopinproof}
\tikzfig{proof4}
\end{equation}
Using the spider laws~\eqref{eq:spider1}, it follows that: 
\begin{equation} \label{eq:phaseproofstep}
\tikzfig{proof5}
\end{equation}
The same equations also hold for the gray spiders
    with the same reasoning
    starting with the right-hand side of~\eqref{eq:Eve_requirements_exact}.
But together these imply that $\evepure$ separates, since:
\[
\tikzfig{proof8p}
\]
\[
\tikzfig{proof9p}
\]
(where for a qubit we have $D = 2$). Hence $\Phi$ separates as well.
\end{proof}

The simple nature of the graphical proof makes clear several immediate generalizations. Firstly, the same protocol and proof may be applied apply whether Alice and Bob are sharing a qubit or a \emph{qudit} for arbitrary (finite) Hilbert space dimension $D$, on which Alice and Bob must simply choose any pair of complementary orthonormal bases.

\section{Eavesdroppers with memory}\label{sec:memory}

A very strong assumption in the previous derivation was that Eve performs the \textit{same} channel every time in attempts to extra information from Alice's string of bits. Suppose now that Eve may now vary her behavior depending on the qubits she has received previously. That is, she may now make use of a \emph{quantum memory} that persists between individual steps of the protocol. Her channel is now of the form $\Phi \colon \mathcal{B}(\ahilb \otimes \ehilb) \to \mathcal{B}(\ahilb \otimes \ehilb)$, with an extra input into which is passed the output from the previous intercepted qubit. She initially prepares her auxiliary system in some state $\rho$. Then Eve's procedure, during $n$ transmissions between Alice and Bob, amounts to the channel $\Phi' \colon \mathcal{B}(\ahilb^{\otimes n}) \to \mathcal{B}(\ahilb^{\otimes n} \otimes \ehilb)$ given by:
\begin{equation}
\tikzfig{memory1labels}
\end{equation}
(As Eve can keep a counter in~$E$,
    this setting is as general as if we would allow Eve
    a different~$\Phi_n\colon \mathcal B(H \otimes E) \to \mathcal B (H \otimes E)$
    for each transmitted qubit.)

As before, for Eve's interference to remain completely undetected,
    we must have that: 
\begin{equation} \label{eq:quantummemoryreq}
\tikzfig{memory2half}
\end{equation}
along with analogous equations for all of the remaining $2n-1$ variations of basis and position. In particular the one for the first bit becomes:



\begin{equation*} 
\tikzfig{memory3}
\end{equation*}
The same equation holds for the gray spiders also. From cancellativity of the tensor product, we conclude that $\Phi \circ (\id{} \otimes \rho)$ satisfies the requirements of Theorem~\ref{thm:exact_security_QKD}, and so separates as: 
\begin{equation*} 
\tikzfig{memory4}
\end{equation*}
for some state $\rho'$ of $\mathcal{B}(\ehilb)$. 
Returning to the requirement~\eqref{eq:quantummemoryreq}, the same argument again with ${n-1}$ instances of $\Phi$ and $\rho'$ replacing $\rho$ shows that $\Phi \circ (\id{} \otimes \rho')$ itself separates. Repeating this argument inductively reveals that $\Phi$ itself separates, and thus Eve again receives no information.



\section{Noise-tolerance}\label{sec:noise}

The proofs we have just seen might be pleasing, but they do not give us \emph{a priori} confidence in the security of QKD: it might be the case that with only a minute disturbance, Eve might extract a lot of information. We will see in this section that this is not the case: the equational proof also leads to a polynomial bound on the distance of Eve's channel from a separable one by the amount of disturbance. The key is an approximate version of essential uniqueness of purifications due to Kretschmann, Schlingemann and Werner~\cite{kretschmann2008continuity,contstinespring}:
\begin{theorem}[Continuity of Stinespring Dilation] \cite[Theorem 1]{contstinespring} \label{thm:continousstinespring}
Let $V_1, V_2 \colon A \to B \otimes E$ be linear maps. Then:

\begin{align*} \label{eq:continuitystinespring}
    &\inf_U \,\Biggl\|\  \tikzfig{v1-minus-v2}\   \Biggr\|^2_\infty \  \\ 
    &\qquad\leq\ \Biggl\| \ \tikzfig{v1-minus-v2-dbl} \  \Biggr\|_{\mathrm{cb}} \\ \ 
    &\qquad\leq \  2\inf_U \,\Biggl\|\  \tikzfig{v1-minus-v2}\  \Biggr\|_\infty  
\end{align*}
where the infima are taken over all unitaries $U \colon E \to E$.
\end{theorem}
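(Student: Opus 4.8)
The plan is to prove the two inequalities separately. The right-hand estimate is an elementary Schatten-norm calculation, while the left-hand estimate is the substantial part and rests on Uhlmann's theorem applied to the Choi operators of the two channels.

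For the \emph{upper bound} $\|\widehat{V_1}-\widehat{V_2}\|_{\mathrm{cb}}\le 2\inf_U\|V_1-(\id_B\otimes U)V_2\|_\infty$, I would fix an arbitrary unitary $U\colon E\to E$ and set $V_2'=(\id_B\otimes U)V_2$; since the $U$ on $E$ is erased by $\Tr_E$, the map $V_2'$ dilates the same channel as $V_2$, so it suffices to bound the cb-distance in terms of $\|V_1-V_2'\|_\infty$. For any ancilla dimension $n$ and any density operator $\sigma$ on $A\otimes\C^n$, I would use the telescoping identity $(V_1\otimes\id)\sigma(V_1\otimes\id)^\dagger-(V_2'\otimes\id)\sigma(V_2'\otimes\id)^\dagger = (V_1\otimes\id)\sigma\big((V_1-V_2')\otimes\id\big)^\dagger+\big((V_1-V_2')\otimes\id\big)\sigma(V_2'\otimes\id)^\dagger$, apply $\Tr_E$ (which does not increase the trace norm), and estimate each summand by Hölder, $\|XY\|_1\le\|X\|_2\|Y\|_2$, together with $\|(V_i\otimes\id)\sigma^{1/2}\|_2\le\|V_i\|_\infty\le 1$ (true for Stinespring dilations of channels) and $\|\sigma^{1/2}\big((V_1-V_2')\otimes\id\big)^\dagger\|_2\le\|V_1-V_2'\|_\infty$. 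Each term is then at most $\|V_1-V_2'\|_\infty$, uniformly in $n$ and $\sigma$, and taking the infimum over $U$ closes this direction.

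For the \emph{lower bound} $\inf_U\|V_1-(\id_B\otimes U)V_2\|^2_\infty\le\|\widehat{V_1}-\widehat{V_2}\|_{\mathrm{cb}}$, the idea is to realize both sides through purifications. Let $\ket{\Omega}\in A\otimes A'$ be a maximally entangled vector; then $(V_i\otimes\id_{A'})\ket{\Omega}$ is a purification, with purifying system $E$, of the Choi operator $\chi_i:=(\Tr_E\widehat{V_i}\otimes\id_{A'})(\ketbra{\Omega}{\Omega})$. By Uhlmann's theorem there is a \emph{single} unitary $U_0\colon E\to E$ realizing the root fidelity, $|\bra{\Omega}(V_1\otimes\id)^\dagger(\id_B\otimes U_0\otimes\id_{A'})(V_2\otimes\id)\ket{\Omega}|=F(\chi_1,\chi_2)$. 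Now every unit vector $\psi\in A$ is obtained from $\ket{\Omega}$ by applying a suitable operator on the $A'$ factor, which commutes past $V_i\otimes\id_{A'}$; substituting this and using that $U_0$ acts only on $E$ transfers the fidelity bound on $\chi_1,\chi_2$ into a lower bound on $\mathrm{Re}\,\bra{\psi}V_1^\dagger(\id_B\otimes U_0)V_2\ket{\psi}$ valid for \emph{all} $\psi$ simultaneously. Expanding $\|(V_1-(\id_B\otimes U_0)V_2)\psi\|^2=\|V_1\psi\|^2+\|V_2\psi\|^2-2\mathrm{Re}\,\bra{\psi}V_1^\dagger(\id_B\otimes U_0)V_2\ket{\psi}$, taking the supremum over $\psi$, and combining with $\|\chi_1-\chi_2\|_1\le\|\widehat{V_1}-\widehat{V_2}\|_{\mathrm{cb}}$ (Choi--Jamio{\l}kowski) and the Fuchs--van de Graaf inequalities relating $F(\chi_1,\chi_2)$ to $\|\chi_1-\chi_2\|_1$ yields the stated inequality with $U=U_0$; tracking the precise constants (the factor $2$ and the square) through these is routine.

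The main obstacle is exactly the uniformity step in the lower bound: the infimum over $U$ stands \emph{outside} the supremum over input states, so a state-by-state application of Uhlmann is useless and one needs a single unitary that is near-optimal for every pure input at once. Passing to the maximally entangled state $\ket{\Omega}$ is what supplies such a $U_0$, and this is also the structural reason the completely bounded norm — equivalently, the trace-norm distance of the Choi operators — rather than the plain norm distance of the channels appears on the right-hand side: the cb-norm "sees" the entangled probe $\ket{\Omega}$ on which the Uhlmann unitary is extracted, so the fidelity estimate on $\chi_1,\chi_2$ propagates to a bound that is genuinely uniform in $\psi$.
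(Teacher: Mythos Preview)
The paper does not prove this theorem at all: it is quoted verbatim as \cite[Theorem 1]{contstinespring} (Kretschmann--Schlingemann--Werner) and used as a black box in the proof of Theorem~\ref{thm:noise_tolerance}. So there is no ``paper's own proof'' to compare against; your proposal is a reconstruction of the original KSW argument rather than of anything in this paper.

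That said, your outline does follow the KSW strategy: the upper bound via the telescoping identity and H\"older is exactly their elementary direction, and the lower bound via Uhlmann applied to purifications of the Choi states is their substantive direction. Two remarks. First, the theorem as stated here is for arbitrary linear maps $V_1,V_2$, while your upper-bound estimate invokes $\|V_i\|_\infty\le 1$; this is fine in the intended application (Stinespring isometries of channels) but is an extra hypothesis relative to the bare statement. Second, the step you flag as the ``main obstacle'' is indeed the delicate one: the assertion that the single Uhlmann unitary $U_0$ on $E$, obtained from the maximally entangled input, yields a lower bound on $\mathrm{Re}\,\bra{\psi}V_1^\dagger(\id_B\otimes U_0)V_2\ket{\psi}$ that is uniform in $\psi$ is not automatic from the sentence you wrote. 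In KSW this goes through a careful computation relating the operator $V_1^\dagger(\id_B\otimes U_0)V_2$ on $A$ to the fidelity of the Choi states, not by ``pushing operators on $A'$ past $V_i$'' state-by-state. Your intuition about why the cb-norm is the right quantity is correct, but the actual derivation of the uniform bound needs the explicit Choi--Jamio{\l}kowski calculation rather than the informal commutation argument you sketch.
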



Here~$\|\,\cdot\,\|_\infty$ denotes the usual \emph{operator norm}
and~${\|\,\cdot\,\|_\cb}$ the \emph{completely bounded norm}
of super operators which is defined via the sup-norm on super-operators:
\begin{equation*}
    \|V \|_\infty := \sup_{\|\psi\|\leq 1} \|V \psi\|
\qquad \|\Phi\|_\infty := \sup_{\| T \| \leq 1}  \| \Phi(T) \|_\infty
\end{equation*}
\begin{equation*}
    \|\Phi\|_\cb := \sup_{n \in \N} \| \id_{M_n}\otimes \Phi\|_\infty
\end{equation*}


The operator norm (on operators) and cb-norm (on super operators)
satisfy the following rules.
\begin{equation*} 
    \| f \circ g \| \leq \| f \| \, \|g \|
    \quad \| f \otimes  g \| = \| f \| \, \|g \|
    \quad \| \id \| = 1
\end{equation*}
The regular sup-norm on super operators does not satisfy the middle rule. For brevity, we will write
\begin{align*}
    V \mathrel{\overset{\varepsilon}{=}} V'
&\ \Leftrightarrow\ 
\| V - V' \|_{\infty}\  \leq \ \varepsilon \\
T \mathrel{\underset{\cb}{\overset{\varepsilon}{=}}} T' &\ \Leftrightarrow \  
\|T - T'\|_{\cb} \ \leq \ \varepsilon.
\end{align*}

The compositional rules for the operator norm allow us to lift approximate equations between sub-diagrams to those between full diagrams, using the rule:
\begin{equation} \label{eq:diagreasoning}
\begin{tikzpicture}
	\begin{pgfonlayer}{nodelayer}
		\node [style=small box] (0) at (0, 0) {$V$};
		\node [style=none] (1) at (0, -1.25) {};
		\node [style=none] (2) at (0, 1.25) {};
		\node [style=right label] (3) at (0.25, -1) {};
		\node [style=right label] (4) at (0.25, 1) {};
	\end{pgfonlayer}
	\begin{pgfonlayer}{edgelayer}
		\draw (1.center) to (0);
		\draw (0) to (2.center);
	\end{pgfonlayer}
\end{tikzpicture}
\overset{\varepsilon}{=}
\begin{tikzpicture}
	\begin{pgfonlayer}{nodelayer}
		\node [style=small box] (0) at (0, 0) {$W$};
		\node [style=none] (1) at (0, -1.25) {};
		\node [style=none] (2) at (0, 1.25) {};
		\node [style=right label] (3) at (0.25, -1) {};
		\node [style=right label] (4) at (0.25, 1) {};
	\end{pgfonlayer}
	\begin{pgfonlayer}{edgelayer}
		\draw (1.center) to (0);
		\draw (0) to (2.center);
	\end{pgfonlayer}
\end{tikzpicture}
\implies
\tikzfig{diag-reason}
\end{equation}


\begin{theorem}[Noise Tolerance] \label{thm:noise_tolerance}
There is a constant ${\noiseconst}$, depending only on the dimension of Alice and Bob's system, such that whenever
\begin{equation} \label{eq:Evenoise}
\tikzfig{approx-req1}\qquad\qquad
\tikzfig{approx-req2}
\end{equation}
we have that  
\begin{equation}
\label{eq:conclusionnoise}
\tikzfig{separates-approx}
\end{equation}
for some state $\rho$ of $\mathcal{B}(\ehilb)$.
\end{theorem}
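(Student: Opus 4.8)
The plan is to rerun the proof of Theorem~\ref{thm:exact_security_QKD} at the level of string diagrams, replacing every exact equality by an $\varepsilon$- or $\sqrt{\varepsilon}$-approximate one, lifting local approximations to global ones through the rule~\eqref{eq:diagreasoning} and the compositional bounds for $\|\cdot\|_\infty$ and $\|\cdot\|_\cb$, and invoking Theorem~\ref{thm:continousstinespring} exactly where the exact proof used essential uniqueness of purification. \emph{Step~1 (reduce to spider form).} Mirroring the passage from~\eqref{eq:eve-secret} to~\eqref{eq:Eve_requirements_exact}, precompose~\eqref{eq:Evenoise} with the encoding maps $\bm e_{\whitedot}, \bm e_{\graydot}$ and postcompose with the measurement maps $\bm m_{\whitedot}, \bm m_{\graydot}$, inserting the antipode $s$ and the scalar $\oneoverD$. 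Since these auxiliary CP-maps are fixed with $\cb$-norm depending only on $D$, the rules $\|\Psi\circ\Phi\|_\cb\le\|\Psi\|_\cb\|\Phi\|_\cb$ and $\|\Phi\otimes\Psi\|_\cb=\|\Phi\|_\cb\|\Psi\|_\cb$ yield the approximate analogues of~\eqref{eq:Eve_requirements_exact} with error $c_1\varepsilon$, $c_1=c_1(D)$.

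\emph{Step~2 (continuity of purification).} Write $\Phi=\widehat{\evepure}$ for a genuine (exact) Stinespring dilation $\evepure$. The approximate left equation of~\eqref{eq:Eve_requirements_exact} then says $\widehat{\evepure_1}\mathrel{\underset{\cb}{\overset{c_1\varepsilon}{=}}}\widehat{\evepure_2}$, where $\evepure_1$ is the linear map obtained by plugging the relevant white spiders into $\evepure$, and $\evepure_2$ is the ideal factored linear map that, by the exact proof, would equal $\evepure_1$ up to a unitary --- a white spider on the Alice--Bob wires tensored with a fixed isometry into Eve's ancilla (cf.~\eqref{eq:unitaryappearsinproof}). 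By the first inequality of Theorem~\ref{thm:continousstinespring}, there is a unitary $U$ on the dilation space with $\evepure_1\mathrel{\overset{\sqrt{c_1\varepsilon}}{=}}(\id\otimes U)\,\evepure_2$; this is the step that forces the square root.

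\emph{Step~3 (propagate and combine).} Push this approximate identity through the same rewrites as~\eqref{eq:counitsontopinproof}--\eqref{eq:phaseproofstep}: the spider-fusion laws~\eqref{eq:spider1} are exact and cost nothing, while each remaining composition with a fixed map costs a further $D$-dependent factor, so $\evepure$ ends up within $c_2\sqrt{\varepsilon}$ in $\|\cdot\|_\infty$ of the white-basis form recorded in~\eqref{eq:phaseproofstep}. The identical chain from the right equation of~\eqref{eq:Eve_requirements_exact} gives the gray-basis analogue, and combining the two exactly as in the last two displays of the proof of Theorem~\ref{thm:exact_security_QKD} forces $\evepure$ to be $c_3\sqrt{\varepsilon}$-close to a map that factors across the Alice--Bob\,/\,Eve cut. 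Converting back from $\evepure$ to $\Phi=\widehat{\evepure}$ via $\|\widehat A-\widehat B\|_\cb\le(\|A\|_\infty+\|B\|_\infty)\|A-B\|_\infty$ (the norms being $O(1)$ in $D$ since $\evepure$ is an isometry and its approximants have bounded norm) gives~\eqref{eq:conclusionnoise}, with $\noiseconst$ the product of the finitely many $D$-dependent constants accumulated along the way; for a qubit $D=2$.

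\emph{Main obstacle.} The one genuinely delicate step is Step~2: one has to set up $\evepure_1$ and $\evepure_2$ so that Theorem~\ref{thm:continousstinespring} applies verbatim --- checking that $\evepure_2$ has the correct domain and codomain, that the quantities being compared are honestly the doubled maps $\widehat{(\cdot)}$ with the environment traced out, and that the unitary freedom the continuity theorem returns lives on exactly the space carrying the global unitary $U$ of~\eqref{eq:unitaryappearsinproof}. Everything else is bookkeeping, but the bookkeeping is where the theorem earns its content: one must verify that the normalisation scalars ($\oneoverD$ and the $D$ produced by zero-leg spiders) enter the norm estimates only through $D$, so that the final $\noiseconst$ genuinely depends on the dimension of Alice and Bob's system and on nothing else --- in particular not on $\varepsilon$, and (for an eventual combination with the memory analysis of Section~\ref{sec:memory}) not on the number of intercepted qubits.
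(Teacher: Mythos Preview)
Your proposal is correct and follows essentially the same route as the paper: rerun the exact diagrammatic proof with approximate equalities, invoke the lower bound of Theorem~\ref{thm:continousstinespring} at the point where essential uniqueness of purification was used (incurring the $\sqrt{\varepsilon}$), propagate via~\eqref{eq:diagreasoning}, and convert back from $\evepure$ to $\Phi$ at the end (the paper cites the upper bound of Theorem~\ref{thm:continousstinespring} for this, which is equivalent to your direct inequality $\|\widehat A-\widehat B\|_\cb\le(\|A\|_\infty+\|B\|_\infty)\|A-B\|_\infty$ since the maps involved are isometries). One minor point: the hypothesis~\eqref{eq:Evenoise} is already the approximate form of~\eqref{eq:Eve_requirements_exact}, not of~\eqref{eq:eve-secret}, so your Step~1 is redundant --- the paper passes directly from~\eqref{eq:Evenoise} to the approximate version of~\eqref{eq:addpurestate}.
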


\begin{proof}
From the upper bound of Theorem~\ref{thm:continousstinespring}, it suffices to prove that there is such a constant $\noiseconst$ for which any purification $\widehat \evepure$ of Eve's channel satisfies:
\begin{equation} \label{eq:sep_noise}
\tikzfig{separates-approx-pure}
\end{equation}
This follows from the proof of Theorem~\ref{thm:exact_security_QKD}, by repeatedly applying the rule~\eqref{eq:diagreasoning} to replace each strict equation in the proof by an approximate one. At each step one only picks up linear factors from the norms of other maps featuring in each diagram. Inspecting the proof, one can see that these are independent of Eve's system or channel.  

In more detail, let $\widehat \evepure$ be a purification of Eve's channel. Then the left hand side of~\eqref{eq:Evenoise} says precisely that the two sides of~\eqref{eq:addpurestate} are within $\varepsilon$ in the completely bounded norm, for any such pure state $\psi$. By Theorem~\ref{thm:continousstinespring}, there is then a unitary $U$ such that the two sides of~\eqref{eq:unitaryappearsinproof} are within $2 \sqrt{\varepsilon}$. Applying two copies of $\whitedot^1_0$ as in \eqref{eq:counitsontopinproof}, we obtain equation~\eqref{eq:counitsontopinproof} up to $2 \sqrt{\varepsilon}||\whitedot^1_0||^2$. 
This ensures that~\eqref{eq:phaseproofstep} holds up to a constant factor in $\sqrt{\varepsilon}$ dependent only on the norms of the $\whitedot^n_m$ maps. Similarly so does the same equation for the gray spider.

The final steps of the proof now follow just as before. Again at each step we simply replace a strict equation by one up to a constant factor in $\sqrt\varepsilon$, dependent only on the white and gray spiders, using the rule~\eqref{eq:diagreasoning}. 
\end{proof}

Taking the distance of Eve's channel from a separable channel as a measure of how much information she can extra from Alice's bit, we see that, as expected, the amount of information is bounded by how much Eve disturbs the communication between Alice and Bob.

We conclude this section with a few caveats:
\begin{enumerate}
\item
    The bound applies if the systems of Alice, Bob and Eve can be modeled by finite-dimensional Hilbert spaces and their interaction as a completely positive map between their tensor products.  For other applications these assumptions are untenable, see e.g.~\cite{yngvason2005role} and \cite{shaji2005s}. There is, however, no clear indication that for quantum information arguments these assumptions are invalid.
Nonetheless it's of interest how far these arguments generalize. Interactions between arbitrary (possibly infinite-dimensional) von Neumann algebras admit a Stinespring-like dilation \cite{westerbaan2017paschke}, but no continuity result is known.
\item
    Like the proof in Section~\ref{sec:qkd}, the proof above assumes Eve performs the same operation every time. This should be extended to incorporate memory as in Section~\ref{sec:memory}.
\item
    The random sampling (check-bits) on its own does not guarantee the bound \eqref{eq:Evenoise}. Renner solves the analogous problem with some effort by showing the resulting combined state of Alice, Bob and Eve may be assumed to be approximately symmetric under permutations of the bits in the run~\cite{renner}. A similar method may apply here.
\end{enumerate}

\section{Outlook}
We saw how to derive a bound on the security of QKD using the diagrammatic behavior of complementary observables and continuity of Stinespring.  Whether the protocol ensures the assumed bound \eqref{eq:Evenoise} \emph{and}
how the conclusion \eqref{eq:conclusionnoise} is related to the more common quantities like mutual information and the secret-key rate, we leave open to future research.

In closing we note that, in the current literature, diagrammatic arguments have typically only been used for exact reasoning about quantum processes. Our proof of Theorem~\ref{thm:noise_tolerance} suggests a  general strategy of using the rule~\eqref{eq:diagreasoning} to extend such exact diagrammatic arguments to approximate ones. We call this technique \emph{$\varepsilon$-bounded diagrammatic reasoning}. It should be applicable to many further quantum protocols, allowing the intuitive diagrammatic approach to quantum information theory developed in~\cite{CKbook} to be used to make physically reasonable and robust arguments, such as required for security protocols. 

\paragraph{Acknowledgments}
We thank Hans Maassen, Sam Staton, Bram Westerbaan and John van de Wetering for insightful discussion. We would also like to thank the anonymous QPL referees for their useful feedback on a short abstract summarising this work.

This work is supported by the ERC under the European Union's Seventh Framework Programme (FP7/2007-2013) / ERC grant n\textsuperscript{o} 320571 and an EPSRC Studentship OUCL/2014/SET.










\bibliographystyle{abbrvnat}
\bibliography{main}

\end{document}